\documentclass[10pt, conference, compsocconf]{IEEEtran}

\usepackage{listings}

\usepackage{amsmath}
\usepackage{varioref}
\usepackage{amssymb}
\usepackage[algoruled,linesnumbered]{algorithm2e}
\usepackage{textcomp}
\usepackage{graphicx}
\usepackage{palatino}
\usepackage{mathrsfs}
\usepackage{dsfont}
\usepackage{multicol}

\usepackage{caption}
\usepackage{subcaption}

\usepackage{amsthm}

\newtheorem{proposition}{Proposition}
\theoremstyle{definition}
\newtheorem{definition}{Definition}
\newtheorem{example}{Example}

\usepackage{tikz}
\usetikzlibrary{arrows,automata}
\usetikzlibrary{decorations.pathmorphing}
\tikzset{initial text={}}


\setlength{\parskip}{0.5\baselineskip}%

\SetKwBlock{SubAlgoBlock}{}{end}
\newcommand{\SubAlgo}[2]{#1 \SubAlgoBlock{#2}}

\DeclareMathOperator{\lin}{lin}

\begin{document}

\sloppy

\title{Update Consistency for Wait-free Concurrent Objects}

\author{
\IEEEauthorblockN{Matthieu Perrin, Achour Mostefaoui, and Claude Jard}
  \IEEEauthorblockA{LINA -- University of Nantes,
    Nantes, France\\
    Email: [firstname.lastname]@univ-nantes.fr}
}

\maketitle

\begin{abstract}
In large scale systems such as the Internet, replicating data is an essential feature in order to provide availability and fault-tolerance. Attiya and Welch proved that using strong consistency criteria such as atomicity is costly as each operation may need an execution time linear with the latency of the communication network. Weaker consistency criteria like causal consistency and PRAM consistency do not ensure convergence. The different replicas are not guaranteed to converge towards a unique state. Eventual consistency guarantees that all replicas eventually converge when the participants stop updating. However, it fails to fully specify the semantics of the operations on shared objects and requires additional non-intuitive and error-prone distributed specification techniques.

\vspace{-1mm}
This paper introduces and formalizes a new consistency criterion, called \emph{update consistency}, that requires the state of a replicated object to be consistent with a linearization of all the updates. In other words, whereas atomicity imposes a linearization of all of the operations, this criterion imposes this only on updates. Consequently some read operations may return out-dated values. Update consistency is stronger than eventual consistency, so we can replace eventually consistent objects with update consistent ones in any program. Finally, we prove that update consistency is universal, in the sense that any object can be implemented under this criterion in a distributed system where any number of nodes may crash. 
\end{abstract}

\begin{IEEEkeywords}
Abstract Data Types; Consistency Criteria; Eventual Consistency; Replicated Object;
Sequential Consistency; Shared Set; Update Consistency;
\end{IEEEkeywords}

\vspace{-1mm}
\section{Introduction}

\vspace{-1mm}
Reliability of large scale systems is a big challenge when building massive distributed applications
over the Internet. At this scale, data replication is essential to ensure availability and fault-tolerance. In a perfect world, distributed objects should behave as if there is a unique physical shared object that evolves following the atomic operations issued by the participants\footnote{We use indifferently participant or process to designate the computing entities that invoke the distributed object.}. This is the aim of strong consistency criteria such as linearizability and sequential consistency. These criteria serialize all the operations so that they look as if they happened sequentially, but they are costly to implement in message-passing systems. If one considers a distributed implementation of a shared register, the worst-case response time must be proportional to the latency of the network either for the reads or for the writes to be sequentially consistent \cite{lipton1988pram} and for all the operations for linearizability \cite{attiya1994sequential}. This generalizes to many objects \cite{attiya1994sequential}. Moreover, the availability of the shared object cannot be ensured in asynchronous systems where more than a minority of the processes of a system may crash \cite{AttiyaBD95}. In large modern distributed systems such as Amazon's cloud, partitions do occur between data centers, as well as inside data centers \cite{vogels2008eventually}. Moreover, it is economically unacceptable to sacrifice availability. The only solution is then to provide weaker consistency criteria. Several weak consistency criteria have been considered for modeling shared memory such as PRAM \cite{lipton1988pram} or causality \cite{ahamad1995causal}. They expect the local histories observed by each process to be plausible, regardless of the other processes. However, these criteria do not impose that the data eventually converges to a consistent state. Eventual consistency \cite{vogels2008eventually} is another weak consistency criterion which requires that when all the processes stop updating then all replicas eventually converge to the same state.

\vspace{-1mm}
This paper follows the long quest of the (a) strongest consistency criterion (there may exist several incomparable criteria) implementable for different types of objects in an asynchronous system where all but one process may crash (wait-free systems \cite{Herlihy91Wait}). A contribution of this paper consists in proving that weak consistency criteria such as eventual consistency and causal consistency cannot be combined is such systems. This paper chooses to explore the enforcement of eventual consistency. The relevance of eventual consistency has been illustrated many times. It is used in practice in many large scale applications such as Amazon's Dynamo highly available key-value store \cite {decandia2007dynamo}. It has been widely studied and many algorithms have been proposed to implement eventually consistent shared object. Conflict-free replicated data types (CRDT) \cite {shapiro2011conflict} give sufficient conditions on the specification of objects so that they can be implemented. More specifically, if all the updates made on the object commute or if the reachable states of the object form a semi-lattice then the object has an eventually consistent implementation \cite{shapiro2011conflict}. Unfortunately, many useful objects are not CRDTs.

\vspace{-1mm}
The limitations of eventual consistency led to the study of stronger criteria such as strong eventual consistency \cite{shapiro2011comprehensive}. Indeed, eventual consistency requires the convergence towards a \emph{common state} without specifying which states are legal. In order to prove the correctness of a program, it is necessary to fully specify which behaviors are accepted for an object. The meaning of an operation often depends on the context in which it is executed. The notion of \emph{intention} is widely used to specify collaborative editing \cite {sun1998achieving,li2000new}. The intention of an operation not only depends on the operation and the state on which it is done, but also on the intentions of the concurrent operations. In another solution \cite{bieniusa2012optimized}, it is claimed that, it is sufficient to specify what the concurrent execution of all pairs of non-commutative operations should give (e.g. an error state). This result, acceptable for the shared set, cannot be extended to other more complicated objects. In this case, any partial order of updates can lead to a different result. This approach was formalized in \cite {burckhardt2014replicated}, where the concurrent specification of an object is defined as a function of partially ordered sets of updates to a consistent state leading to specifications as complicated as the  implementations themselves. Moreover, a concurrent specification of an object uses the notion of \emph{concurrent events}. In message-passing systems, two events are concurrent if they are produced by different processes and each process produced its event before it received the notification message from the other process. In other words, the notion of concurrency depends on the implementation of an object not on its specification. Consequently, the final user may not know if two events are concurrent without explicitly tracking the underlying messages. A specification should be independent of the system on which it is implemented.

\vspace{-1mm}
\paragraph{Contributions of the paper} for not restricting this work to a given data structure, this paper first defines a class of data types called UQ-ADT for \emph{update-query abstract data type}. This class encompasses all data structures where an operation either modifies the state of the object (update) or returns a function on the current state of the object (query). This class excludes data types such as a stack where the pop operation removes the top of the stack and returns it (update and query at the same time). 
However, such operations can always be separated into a query and an update (lookup\_top and delete\_top in the case of the stack) which is not a problem as, in weak consistency models, it is impossible to ensure atomicity anyway. 
This paper has three main contributions.
\vspace{-2mm}
\begin{itemize}
  \item It proves that in a wait-free asynchronous system, it is not possible to implement eventual and causal consistency for all UQ-ADTs.
  \item It introduces \emph{update consistency}, a new consistency criterion stronger than eventual consistency and for which the converging state must be consistent with a linearization of the updates.
  \item Finally, it proves that for any UQ-ADT object with a sequential specification there exists an update consistent implementation by providing a generic construction. 
\end{itemize}

\vspace{-2mm}
The remainder of this paper is organized as follows. Section \ref{section:formalization} formalizes the notion of consistency criteria and the type of objects we target in this paper. 
Section \ref{section:eventual consistency} recalls the definition of (strong) eventual consistency.
Section \ref {section:pipelined convergence} proves that eventual consistency cannot be combined with causal consistency in wait-free systems. Section \ref {section:update consistency} introduces (strong) update consistency and compares it with (strong) eventual consistency. Section \ref{section:set} compares, through the example of the set, the expressiveness of strong update consistency and strong eventual consistency. Section \ref{section:implementation} presents a generic construction for any UQ-ADT object with a sequential specification. Finally, Section \ref{section:conclusion} concludes the paper.

\vspace{-2mm}
\section{Abstract Data Types and Consistency Criteria}
\vspace{-3mm}
\label{section:formalization}

Before introducing the new consistency criterion, this section formalizes the notion of object and how a consistency criterion is defined. In distributed systems, sharing objects is a way to abstract message-passing communication between processes. The abstract type of these objects has a sequential specification, defined in this paper by a transition system that characterizes the sequential histories allowed for this object. However, shared objects are implemented in a distributed system using replication and the events of the distributed history generated by the execution of a distributed program is a partial order \cite{lamport1978time}. The consistency criterion makes the link between the sequential specification of an object and a distributed execution that invokes it. This is done by characterizing the partially ordered histories of the distributed program that are acceptable. The formalization used in this paper is explained with more details in \cite{RR}.

\vspace{-1mm}
An abstract data type is specified using a transition system very close to Mealy machines \cite{mealy1955method} except that infinite transition systems are allowed as many objects have an unbounded specification. As stated in the Introduction, this paper focuses on "update-query" objects. On the one hand, the updates have a side-effect that usually affects the state of the object (hence all processes), but return no value. They correspond to transitions between abstract states in the transition system. On the other hand, the queries are read-only operations. They produce an output that depends on the state of the object. Consequently, the input alphabet of the transition system is separated into two classes of operations (updates and queries).

\begin{definition}[Update-query abstract data type]
  An update-query abstract data type (UQ-ADT) is a tuple $O = (U, Q_i, Q_o, S, s_0, T, G)$ such that:
  \vspace{-1mm}
  \begin{itemize}
  \item $U$ is a countable set of \emph{update} operations;
  \item $Q_i$ and $Q_o$ are countable sets called \emph{input} and \emph{output} alphabets; 
    $Q = Q_i \times Q_o$ is the set of \emph{query} operations. A query operation $(q_i, q_o)\in Q$ is denoted $q_i/q_o$ (query $q_i$ returns value $q_o$).
  \item $S$ is a countable set of \emph{states};
  \item $s_0 \in S$ is the \emph{initial state};
  \item $T : S\times U\rightarrow S$ is the \emph{transition function};
  \item $G : S\times Q_i\rightarrow Q_o$ is the \emph{output function}.
\end{itemize}
  \vspace{-1mm}
A sequential history is a sequence of operations. An infinite sequence of operations $(w_i)_{i\in \mathbb{N}} \in (U\cup Q)^\omega$ is recognized by $O$ if there exists an infinite sequence of states $(s_i)_{i\ge 1} \in S^\omega$ (note that $s_0$ is the initial state) such that for all $i\in \mathbb{N}$, $T(s_i, w_i) = s_{i+1}$ if $w_i \in U$ or $s_i = s_{i+1}$ and $G(s_i, q_i) = q_o$ if $w_i = q_i/q_o \in Q$. The set of all infinite sequences recognized by $O$ and their finite prefixes is denoted by $L(O)$. Said differently, $L(O)$ is the set of all the sequential histories allowed for $O$.
\end{definition}

\vspace{-3mm}
Along the paper, replicated sets are used as the key example. Three kinds of operations are possible: two update operation by element, namely insertion (I) and deletion (D) and a query operation read (R) that returns the values that belong to the set. Let $\mathit{Val}$ be the support of the replicated set (it contains the values that can be inserted/deleted). At the beginning, the set is empty and when an element is inserted, it becomes present until it is deleted. More formally, it corresponds to the UQ-ADT given in Example \ref{def:set}. 

\vspace{-1mm}
\begin{example}[Specification of the set]\label{def:set}
  Let $\mathit{Val}$ be a countable set, called support. The set object
  $\mathcal{S}_{\mathit{Val}}$ is the UQ-ADT $(U, Q_i, Q_o, S, \emptyset, T, G)$ with:
\vspace{-1mm}
  \begin{itemize}
  \item $ U = \{\mathrm{I}(v), \mathrm{D}(v) : v\in \mathit{Val}\}$;
  \item $ Q_i = \{\mathrm{R}\}$, and $Q_o = S = \mathcal{P}_{<\infty}(\mathit{Val})$ contain all the finite subsets of $\mathit{Val}$;
  \item for all $s\in S$ and $v\in \mathit{Val}$, $G(s, \mathrm{R}) = s$, \\$T(s, \mathrm{I}(v)) = s \cup \{v\}$ and $T(s, \mathrm{D}(v)) = s \setminus \{v\}$.
  \end{itemize}
\vspace{-1mm}
\end{example}

\vspace{-1mm}
The set $U$ of updates is the set of all insertions and deletions of any value of $\mathit{Val}$. The set of queries $Q_i$ contains a unique operation $R$, a read operation with no parameter. A read operation may return any value in $Q_o$, the set of all finite subsets of $\mathit{Val}$. The set $S$ of the possible states is the same as the set of possible returned values $Q_o$ as the read query returns the content of the set object. $I(v)$ (resp. $D(v)$) with $v\in \mathit{Val}$ denotes an insertion (resp. a deletion) operation of the value $v$ into the set object. $R/s$ denotes a read operation that returns the set $s$ representing the content of the set.

\vspace{-1mm}
During an execution, the participants invoke an object instance of an abstract data type using the associated operations (queries and updates). This execution produces a set of partially ordered events labelled by the operations of the abstract data type. This representation of a distributed history is generic enough to model a large number of distributed systems. For example, in the case of communicating sequential processes, an event $a$ precedes an event $b$ in the \emph{program order} if they are executed by the same process in that sequential order. It is also possible to model more complex modern systems in which new threads are created and destroyed dynamically, or peer-to-peer systems where peers may join and leave.

\vspace{-1mm}
\begin{definition}[Distributed History]
  A distributed history is a tuple $H = (U, Q, E, \Lambda, \mapsto)$:
\vspace{-1mm}
  \begin{itemize}
  \item $U$ and $Q$ are disjoint countable sets of \emph{update} and \emph{query} operations, and all queries $q\in Q$ are in the form $q = q_i/q_o$;
  \item $E$ is a countable set of \emph{events};
  \item $\Lambda : E \rightarrow U\cup Q$ is a \emph{labelling function};
  \item $\mapsto \subset E\times E$ is a partial order called \emph{program order}, such that for all  $e\in E$, $\{e'\in E : e' \mapsto e\}$ is finite.
  \end{itemize}

\vspace{-1mm}
  Let $H = (U, Q, E, \Lambda, \mapsto)$ be a history. The sets $U_H = \{e\in E : \Lambda(e)\in U\}$ and $Q_H = \{e\in E : \Lambda(e)\in Q\}$ denote its sets of update and query events respectively. We also define some projections on the histories. The first one allows to withdraw some events: for $F\subset E$, $H_F = (U, Q, F, \Lambda|_F, \mapsto \cap (F\times F))$ is the history that contains only the events of $F$. The second one allows to substitute the order relation: if $\rightarrow$ is a partial order that respects the definition of a program order ($\mapsto$), $H^\rightarrow = (U, Q, E, \Lambda, \rightarrow \cap (E\times E))$ is the history in which the events are ordered by $\rightarrow$. Note that the projections commute, which allows the notation $H_F^\rightarrow$.
\end{definition}

\begin{definition}[Linearizations]
Let $H = (U, Q, E, \Lambda, \mapsto)$ be a distributed history. A linearization of $H$ corresponds to a sequential history that contains the same events as $H$ in an order consistent with the program order. More precisely, it is a word $\Lambda(e_0)\ldots\Lambda(e_n)\ldots$ such that $\{e_0, \ldots, e_n, \ldots\} =E$ and for all $i$ and $j$, if $i < j$, $e_j\not\mapsto e_i$. We denote by $\lin(H)$ the set of all linearizations of $H$.
\end{definition}

\begin{definition}[Consistency criterion]
A consistency criterion $C$ characterizes which histories are allowed for a given data type. It is a function $C$ that associates with any UQ-ADT $O$, a set of distributed histories $C(O)$. A shared object (instance of an UQ-ADT $O$) is $C$-consistent if all the histories it allows are in $C(O)$.
\end{definition}

\vspace{-3mm}
\section{Eventual Consistency}
\vspace{-3mm}
\label{section:eventual consistency}

In this section, we recall the definitions of eventual consistency \cite{vogels2008eventually} and strong eventual consistency \cite{shapiro2011comprehensive}. Fig. \ref{figure:histories} illustrates these two consistency criteria on small examples. In the remaining of this article, we consider an UQ-ADT $O = (U, Q_i, Q_o, S, s_0, T, G)$ and a history $H = (U, Q, E, \Lambda, \mapsto)$.

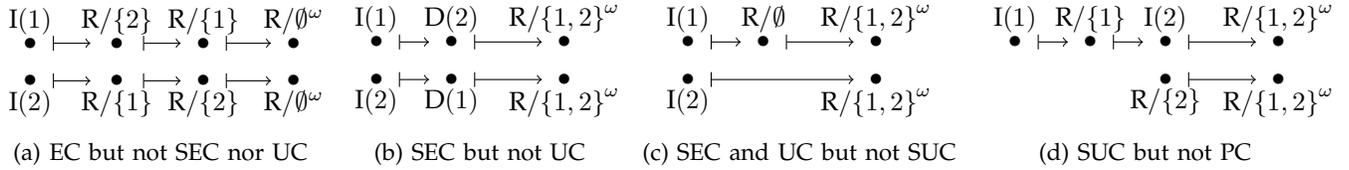
\begin{figure*}[t]
  \centering
  \hspace{\fill}
  \begin{subfigure}[b]{0.24\textwidth}
    \centering
    \begin{tikzpicture}
      \draw     (1,1)    node{$\bullet$} ;
      \draw     (1,1)    node[above]{$\text{I}(1)$} ;
      \draw[|->] (1.3,1) -- (1.8,1) ;
      \draw     (2.15,1)    node{$\bullet$} ;
      \draw     (2.15,1)    node[above]{$\text{R}/{\{2\}}$} ;
      \draw[|->] (2.5,1) -- (3,1) ;
      \draw     (3.3,1)    node{$\bullet$} ;
      \draw     (3.3,1)    node[above]{$\text{R}/{\{1\}}$} ;
      \draw[|->] (3.6,1) -- (4.2,1) ;
      \draw     (4.5,1)    node{$\bullet$} ;
      \draw     (4.5,1)    node[above]{$\text{R}/\emptyset^\omega$} ;
      
      \draw     (1,0.5)    node{$\bullet$} ;
      \draw     (1,0.5)    node[below]{$\text{I}(2)$} ;
      \draw[|->] (1.3,0.5) -- (1.8,0.5) ;
      \draw     (2.15,0.5)    node{$\bullet$} ;
      \draw     (2.15,0.5)    node[below]{$\text{R}/{\{1\}}$} ;
      \draw[|->] (2.5,0.5) -- (3,0.5) ;
      \draw     (3.3,0.5)    node{$\bullet$} ;
      \draw     (3.3,0.5)    node[below]{$\text{R}/{\{2\}}$} ;
      \draw[|->] (3.6,0.5) -- (4.2,0.5) ;
      \draw     (4.5,0.5)    node{$\bullet$} ;
      \draw     (4.5,0.5)    node[below]{$\text{R}/\emptyset^\omega$} ;
    \end{tikzpicture}
      \caption{EC but not SEC nor UC}
      \label{figure:ec rien d autre}
  \end{subfigure}
  \hspace{\fill}
  \begin{subfigure}[b]{0.2\textwidth}
    \centering
    \begin{tikzpicture}
      \draw     (1,1)    node{$\bullet$} ;
      \draw     (1,1)    node[above]{$\text{I}(1)$} ;
      \draw[|->] (1.3,1) -- (1.7,1) ;
      \draw     (2,1)  node{$\bullet$} ;
      \draw     (2,1)  node[above]{$\text{D}(2)$} ;
      \draw[|->] (2.3,1) -- (3.2,1) ;
      \draw     (3.5,1)  node{$\bullet$} ;
      \draw     (3.5,1)  node[above]{$\text{R}/{\{1, 2\}}^\omega$} ;
      
      \draw     (1,0.5)    node{$\bullet$} ;
      \draw     (1,0.5)    node[below]{$\text{I}(2)$} ;
      \draw[|->] (1.3,0.5) -- (1.7,0.5) ;
      \draw     (2,0.5)  node{$\bullet$} ;
      \draw     (2,0.5)  node[below]{$\text{D}(1)$} ;
      \draw[|->] (2.3,0.5) -- (3.2,0.5) ;
      \draw     (3.5,0.5)    node{$\bullet$} ;
      \draw     (3.5,0.5)    node[below]{$\text{R}/{\{1, 2\}}^\omega$} ;
    \end{tikzpicture}
      \caption{SEC but not UC}
      \label{figure:up_inc_history}
  \end{subfigure}
  \hspace{\fill}
  \begin{subfigure}[b]{0.24\textwidth}
    \centering
    \begin{tikzpicture}
      \draw     (1,1)    node{$\bullet$} ;
      \draw     (1,1)    node[above]{$\text{I}(1)$} ;
      \draw[|->] (1.3,1) -- (1.7,1) ;
      \draw     (2,1)  node{$\bullet$} ;
      \draw     (2,1)  node[above]{$\text{R}/{\emptyset}$} ;
      \draw[|->] (2.3,1) -- (3.2,1) ;
      \draw     (3.5,1)  node{$\bullet$} ;
      \draw     (3.5,1)  node[above]{$\text{R}/{\{1, 2\}}^\omega$} ;
      
      \draw     (1,0.5)    node{$\bullet$} ;
      \draw     (1,0.5)    node[below]{$\text{I}(2)$} ;
      \draw[|->] (1.3,0.5) -- (3.2,0.5) ;
      \draw     (3.5,0.5)    node{$\bullet$} ;
      \draw     (3.5,0.5)    node[below]{$\text{R}/{\{1, 2\}}^\omega$} ;
    \end{tikzpicture}
      \caption{SEC and UC but not SUC}
      \label{figure:tout sauf suc}
  \end{subfigure}
  \hspace{\fill}
  \begin{subfigure}[b]{0.24\textwidth}
    \centering
    \begin{tikzpicture}
      \draw     (1,1)    node{$\bullet$} ;
      \draw     (1,1)    node[above]{$\text{I}(1)$} ;
      \draw[|->] (1.3,1) -- (1.7,1) ;
      \draw     (2,1)  node{$\bullet$} ;
      \draw     (2,1)  node[above]{$\text{R}/{\{1\}}$} ;
      \draw[|->] (2.3,1) -- (2.7,1) ;
      \draw     (3,1)  node{$\bullet$} ;
      \draw     (3,1)  node[above]{$\text{I}(2)$} ;
      \draw[|->] (3.3,1) -- (4.2,1) ;
      \draw     (4.5,1)  node{$\bullet$} ;
      \draw     (4.5,1)  node[above]{$\text{R}/{\{1, 2\}}^\omega$} ;
      
      \draw     (3,0.5)    node{$\bullet$} ;
      \draw     (3,0.5)    node[below]{$\text{R}/{\{2\}}$} ;
      \draw[|->] (3.3,0.5) -- (4.2,0.5) ;
      \draw     (4.5,0.5)    node{$\bullet$} ;
      \draw     (4.5,0.5)    node[below]{$\text{R}/{\{1, 2\}}^\omega$} ;
    \end{tikzpicture}
      \caption{SUC but not PC}
      \label{figure:suc}
  \end{subfigure}
  \hspace{\fill}
  \caption{Four histories for an instance of $\mathcal{S}_{\mathbb{N}}$ (cf. example 1), with different
    consistency criteria. The arrows represent the program order, and an event labeled $\omega$
    is repeated an infinite number of times.}
  \label{figure:histories}
\end{figure*}

\vspace{-1mm}
\paragraph{Eventual consistency} eventual consistency requires that, if all the participants stop updating, all the replicas
eventually converge to the same state. In other word, $H$ is eventually consistent if it contains an infinite number of updates (i.e. the participants never stop writing) or if there exists a state (the consistent state) compatible with all but a finite number of queries.

\vspace{-1mm}
\begin{definition}[Eventual consistency]
A history $H$ is eventually consistent (EC) if $U_H$ is infinite or there exists a state $s\in S$ such that 
the set of queries that return non consistent values while in the state $s$, $\{q_i/q_o\in Q_H : G(s, q_i)\neq q_o\}$, 
is finite.
\end{definition}

\vspace{-2mm}
All the histories presented in Fig. \ref{figure:histories} are eventually consistent. The executions represent two processes sharing a set of integers. In Fig. \ref{figure:ec rien d autre}, the first process inserts value 1 and then reads twice the set and gets respectively $\{2\}$ and $\{1\}$; afterwards, it executes an infinity of read operations that return the empty set ($\omega$ in superscript denotes the operation is executed an infinity of times). In the meantime, the second process inserts a 2 then reads the set an infinity of times. It gets respectively $\{1\}$ and $\{2\}$ the two first times, and empty set an infinity of times. Both processes converge to the same state ($\emptyset$), so the history is eventually consistent. However, before converging, the processes can read anything a finite but unbounded number of times.

\vspace{-2mm}
\paragraph{Strong eventual consistency} strong eventual consistency requires that two replicas of the same object converge as soon as they have received the same updates. The problem with that definition is that the notions of replica and message reception are inherent to the implementation, and are hidden from the programmer that uses the object, so they should not be used in its specification. A visibility relation is introduced to model the notion of message delivery. This relation is not an order since it is not required to be transitive.

\vspace{-1mm}
\begin{definition}[Strong eventual consistency]
  A history $H$ is strong eventually consistent (SEC) if there exists an acyclic and reflexive relation
  $\xrightarrow{vis}$ (called \emph{visibility} relation) that contains $\mapsto$ and such that:
  \vspace{-2mm}
  \begin{itemize}
  \item Eventual delivery: when an update is viewed by a replica, it is eventually
    viewed by all replicas, so there can be at most a finite number of operations 
    that do not view it:\\
    $\forall u\in U_H, \{e\in E, u \hspace{1.5mm}\not\hspace{-1.5mm}\xrightarrow{vis} e \} ~\text{ is finite};$
  \item Growth: if an event has been viewed once by a process, it will remain visible forever:\\
    $\forall e, e', e''\in E, (e \xrightarrow{vis} e' \land e' \mapsto e'')
    \Rightarrow (e \xrightarrow{vis} e'');$
  \item Strong convergence: if two query operations view the same past of updates $V$, 
    they can be issued in the same state $s$:
    $\forall V \subset U_H, \exists s\in S, \forall q_i/q_o\in Q_H, $\\$V = \{u\in U_H : u\xrightarrow{vis} q_i/q_o\} \Rightarrow G(s,q_i) = q_o.$
  \end{itemize}
\end{definition}
\vspace{-2mm}
The history of Fig. \ref{figure:ec rien d autre} is not strong eventually consistent because the $\mathrm{I}(1)$ must be visible by all the queries of the first process (by reflexivity and growth), so there are only two possible sets of visible updates ($\{\mathrm{I}(1)\}$ and$\{\mathrm{I}(1), \mathrm{I}(2)\}$) for these events, but the queries are done in three different states ($\{1\}$, $\{2\}$ and $\emptyset$); consequently, at least two of these queries see the same set of updates and thus need to return the same value. Fig. \ref{figure:tout sauf suc}, on the contrary, is strong eventually consistent: the replicas that see $\{\mathrm{I}(1)\}$ are in state $\emptyset$ and those that see $\{\mathrm{I}(1), \mathrm{I}(2)\}$ are in state $\{1, 2\}$.

\vspace{-3mm}
\section{Pipelined Convergence}
\vspace{-3mm}
\label{section:pipelined convergence}

A straightforward way to strengthen eventual consistency is to compose it with another consistency criterion that imposes restrictions on the values that can be returned by a read operation. Causality is often cited as a possible candidate to play this role \cite{sun1998achieving}. As causal consistency is well formalized only for memory, we will instead consider Pipelined Random Access Memory (PRAM) \cite{lipton1988pram}, a weaker consistency criterion.
As the name suggests, PRAM was initially defined for memory. However, it can be easily extended to all UQ-ADTs. Let's call this new consistency criterion \emph{pipelined consistency (PC)}. In a pipelined consistent computation, each process must have a consistent view of its local history with all the updates of the computation. More formally, it corresponds to Def. \ref{def:PC}. Pipelined consistency is local to each process, as different processes can see concurrent updates in a different order.

\vspace{-1mm}
\begin{definition}\label{def:PC}
  A history $H$ is \emph{pipelined consistent} (PC) if, for all maximal chains 
  (i.e. sets of totally ordered events) $p$ of $H$, $\lin\left(H_{U_H \cup p}\right) \cap L(O)\neq \emptyset.$
\end{definition}
\vspace{-2mm}
Pipelined consistency can be implemented at a very low cost in wait-free systems. Indeed, it only requires FIFO reception. However, it does not imply convergence. For example, the history given in Figure \ref {figure:pc_pas_ec} is pipelined consistent but not eventually consistent. In this history, two processes $p_1$ and $p_2$ share a set of integers. Process $p_1$ first inserts $1$ and then $3$ in the set and then reads the set forever. Meanwhile, process $p_2$ inserts $2$, deletes $3$ and reads the set forever. The words $w_1$ and $w_2$ are correct linearizations for both processes, with regard to Definition \ref{def:PC} so the history is pipelined consistent, but after stabilization, $p_2$ sees the element $3$ whereas $p_1$ does not.

\begin{figure}[t]
  \begin{center}
  \begin{tikzpicture}
    \draw      (1,1)    node{$\bullet_a$} ;
    \draw      (1,1)    node[above]{$\text{I}(1)$} ;
    \draw[|->] (1.3,1) -- (2.2,1) ;
    \draw      (2.5,1)    node{$\bullet_b$} ;
    \draw      (2.5,1)    node[above]{$\text{I}(3)$} ;
    \draw[|->] (2.8,1) -- (3.7,1) ;
    \draw      (4,1)    node{$\bullet_c$} ;
    \draw      (4,1)    node[above]{$\text{R}/{\{1, 3\}}$} ;
    \draw[|->] (4.3,1) -- (5.7,1) ;
    \draw      (6,1)    node{$\bullet_d$} ;
    \draw      (6,1)    node[above]{$\text{R}/{\{1, 2, 3\}}$} ;
    \draw[|->] (6.3,1) -- (7.7,1) ;
    \draw      (8,1)    node{$\bullet_e$} ;
    \draw      (8,1)    node[above]{$\text{R}/{\{1, 2\}^\omega}$} ;

    \draw      (1,0.5)    node{$\bullet^f$} ;
    \draw      (1,0.5)    node[below]{$\text{I}(2)$} ;
    \draw[|->] (1.3,0.5) -- (2.2,0.5) ;
    \draw      (2.5,0.5)    node{$\bullet^g$} ;
    \draw      (2.5,0.5)    node[below]{$\text{D}(3)$} ;
    \draw[|->] (2.8,0.5) -- (3.7,0.5) ;
    \draw      (4,0.5)    node{$\bullet^h$} ;
    \draw      (4,0.5)    node[below]{$\text{R}/{\{2\}}$} ;
    \draw[|->] (4.3,0.5) -- (5.7,0.5) ;
    \draw      (6,0.5)    node{$\bullet^i$} ;
    \draw      (6,0.5)    node[below]{$\text{R}/{\{1, 2\}}$} ;
    \draw[|->] (6.3,0.5) -- (7.7,0.5) ;
    \draw      (8,0.5)    node{$\bullet^j$} ;
    \draw      (8,0.5)    node[below]{$\text{R}/{\{1, 2, 3\}^\omega}$} ;
  \end{tikzpicture}
  \end{center}
  $$ w_1 = \text{I}(1)\cdot \text{I}(3)\cdot \text{R}/{\{1, 3\}} \cdot
\text{I}(2)\cdot \text{R}/{\{1, 2, 3\}} \cdot \text{D}(3) \cdot
\text{R}/{\{1, 2\}^\omega} $$
  $$ w_2 = \text{I}(2)\cdot \text{D}(3)\cdot \text{R}/{\{2\}} \cdot
\text{I}(1)\cdot \text{R}/{\{1, 2\}} \cdot \text{I}(3) \cdot
\text{R}/{\{1, 2, 3\}^\omega} $$
  \caption{PC but not EC}
  \label{figure:pc_pas_ec}
\end{figure}
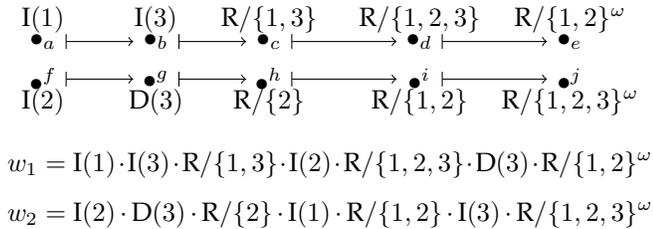

\vspace{-1mm}
\begin{proposition}[Implementation]
Pipelined convergence, that imposes both pipelined consistency and eventual consistency, cannot be implemented in a wait-free system. 
\end{proposition}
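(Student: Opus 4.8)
The plan is to prove the statement by exhibiting a single UQ-ADT $O$ for which no wait-free algorithm can be at once pipelined consistent (Definition~\ref{def:PC}) and eventually consistent, the contradiction coming from an indistinguishability argument. First I would stress that the object has to be chosen carefully: the set $\mathcal{S}_{\mathit{Val}}$ of Example~\ref{def:set} does \emph{not} work, because its updates commute or ``absorb'' one another (a $\mathrm{D}(v)$ can always be linearized before a matching $\mathrm{I}(v)$), which is exactly what lets a replica revise an out-dated read into the converged state; I will instead use an order-sensitive, non-absorbing object, e.g.\ a grow-only log $O$ with two updates $a,b$, states $S=\{a,b\}^{*}$, $s_{0}=\epsilon$, transition $T(w,u)=w\cdot u$ (append), and a single query $\mathrm{R}$ with $G(w,\mathrm{R})=w$. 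Its crucial property is that, after applying the two updates $\{a,b\}$, the only reachable states are $\langle a,b\rangle$ and $\langle b,a\rangle$, and $\mathrm{R}$ tells them apart.

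Assume then that $A$ is a wait-free implementation of $O$ that is both PC and EC, and consider the following execution on two processes $p_{1},p_{2}$: $p_{1}$ invokes $a$ and then invokes $\mathrm{R}$ infinitely often, while $p_{2}$ invokes $b$ and then invokes $\mathrm{R}$ infinitely often. I would schedule $p_{1}$ to complete $a$ and its first read in isolation (legitimate, since wait-freedom guarantees termination even when every other process has crashed), then $p_{2}$ to complete $b$ and its first read in isolation (with the messages $p_{1}$ may have sent still undelivered), and finally deliver all messages and let both processes keep reading forever, with fair delivery from then on. The key consequence of wait-freedom is that $p_{1}$'s first read returns $\langle a\rangle$: up to that point $p_{1}$'s local steps coincide exactly with those of the execution in which $p_{2}$ crashed from the start, and there $U_{H}=\{a\}$, so PC applied to the chain $a\cdot\mathrm{R}$ --- whose only linearization respecting the program order is $a\cdot\mathrm{R}$ --- forces the read to return $G(T(s_{0},a),\mathrm{R})=\langle a\rangle$. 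Symmetrically $p_{2}$'s first read returns $\langle b\rangle$.

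I would then conclude from EC and PC together. As $U_{H}=\{a,b\}$ is finite, EC yields a state $s^{*}$ such that all but finitely many queries of $H$ return $G(s^{*},\mathrm{R})$; in particular all but finitely many of $p_{1}$'s reads, and all but finitely many of $p_{2}$'s reads, equal $s^{*}$. Now apply PC to the maximal chain of $p_{1}$, namely $a\cdot r_{1}\cdot r_{2}\cdots$ with $r_{1}=\mathrm{R}/\langle a\rangle$: in any legal linearization of this chain together with $U_{H}$, the update $a$ precedes every $r_{i}$, and $b$ cannot precede $r_{1}$ (otherwise the state at $r_{1}$ would be a two-letter word, not $\langle a\rangle$); hence $b$ sits at some finite position after $a$, so each of the infinitely many reads placed after $b$ returns $\langle a,b\rangle$, which forces $s^{*}=\langle a,b\rangle$. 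The symmetric analysis of $p_{2}$'s chain forces $s^{*}=\langle b,a\rangle$. Since $\langle a,b\rangle\neq\langle b,a\rangle$ this is a contradiction, and the proposition follows; note that the same construction also rules out combining EC with causal consistency, since causal consistency implies PC.

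The step I expect to be the main obstacle is the wait-freedom argument: it requires the formal model of wait-free asynchronous message passing (only sketched in the text) and the precise claim that $p_{1}$'s local execution in the constructed run is indistinguishable, step for step, from one in which the other process is initially crashed, so that the PC-forced value $\langle a\rangle$ of the solo run carries over. A secondary, more technical difficulty is the linearization bookkeeping that pins $s^{*}$ down in two incompatible ways; and one must verify that the chosen object genuinely defeats \emph{every} implementation, i.e.\ that no reordering and no absorbing update can ever let a replica turn $\langle a\rangle$ into $\langle b,a\rangle$ --- which is precisely why a log-like object is needed and the plain set is not enough.
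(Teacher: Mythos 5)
Your proof is correct, and its skeleton is exactly the paper's: use the Attiya--Welch partition/indistinguishability argument to force the two first reads to return divergent, purely local values, then let pipelined consistency propagate that divergence into two incompatible limit states, contradicting eventual consistency. The only real difference is the witness. The paper keeps the running set object $\mathcal{S}_{\mathit{Val}}$ and defeats it with the program $\mathrm{I}(1)\cdot\mathrm{I}(3)\cdot\mathrm{R}^{\omega}$ versus $\mathrm{I}(2)\cdot\mathrm{D}(3)\cdot\mathrm{R}^{\omega}$: since $p_1$'s first read returns $\{1,3\}$ and $p_2$'s returns $\{2\}$, the program order pins $\mathrm{D}(3)$ after $\mathrm{I}(3)$ in $p_1$'s linearization and $\mathrm{I}(3)$ after $\mathrm{D}(3)$ in $p_2$'s, so the limits $\{1,2\}$ and $\{1,2,3\}$ differ. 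You instead introduce an ad hoc append-only log with two non-commuting updates, which makes step (2) more transparent (the two limit states $\langle a,b\rangle$ and $\langle b,a\rangle$ are forced with one update per process and no case analysis). Both are valid, since one object suffices to refute implementability; the paper's choice has the side benefit of showing the impossibility already for the standard set that the rest of the paper is about. One small correction: your opening claim that $\mathcal{S}_{\mathit{Val}}$ ``does not work'' is too strong --- it is only the naive two-insert scenario that fails for the set; the paper's insert/delete pairing shows the set itself is a perfectly good counterexample, because program order can forbid the absorbing reordering you worry about. Finally, the obstacle you flag (formalizing the step-for-step indistinguishability from a solo run) is real but is exactly the point at which the paper, too, argues informally by appeal to \cite{attiya1994sequential}, so you are not missing anything the paper supplies.
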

\vspace{-1mm}
\begin{proof}
We consider the same program as in Figure \ref{figure:pc_pas_ec}, and we suppose the shared set is pipelined convergent. By the same argument as developed in \cite{attiya1994sequential}, it is not possible to prevent the processes from not seeing each other's first update at their first reads. Indeed, if $p_1$ did not receive any message from process $p_2$, it is impossible for $p_1$ to make the difference between the case where $p_2$ crashed before sending any message and the case where all its messages were delayed. To achieve availability, $p_1$ must compute the return value based solely on its local knowledge, so it returns $\{1, 3\}$. Similarly, $p_2$ returns $\{2\}$. To circumvent this impossibility, it is necessary to make synchrony assumption on the system (e.g. bounds on transmission delays) or to assume the correctness of a majority of processes.

\vspace{-1mm}
If the first read of $p_1$ returns $\{1, 3\}$, as the set is pipelined consistent, there must exist a linearization for $p_1$ that contains all the updates, $\text{R}/{\{1, 3\}}$ and an infinity of queries. As $2 \not\in \{1, 3\}$, the possible linearizations are defined by the $\omega$-regular language $\text{I}(1)\cdot \text{I}(3)\cdot \text{R}/{\{1, 3\}}^+ \cdot \text{I}(2)\cdot \text{R}/{\{1, 2, 3\}}^\star \cdot \text{D}(3) \cdot \text{R}/{\{1, 2\}^\omega}$, so any history must contain an infinity of events labelled $\text{R}/{\{1, 2\}^\omega}$. Similarly, if $p_2$ starts by reading $\{2\}$, it will eventually read $\{1, 2, 3\}$ an infinity of times. This implies that pipelined convergence cannot be provided in wait-free systems.
\end{proof}

\vspace{-1mm}
Consequently causal consistency, that is stronger than pipelined consistency, cannot be satisfied together with eventual consistency in a wait-free system.

\section{Update Consistency}
\label{section:update consistency}

In this section, we introduce two new consistency criteria: update consistency and strong update consistency\footnote{These consistency criteria were previously presented as a brief announcement in DISC 2014 \cite{perrinbrief}.}, and we compare them to eventual consistency and strong eventual consistency. Fig. \ref{figure:histories} illustrates these four consistency criteria on four small examples.

\paragraph{Update consistency} eventual consistency and strong eventual consistency are not interested in defining the states that are reached during the histories (the same updates have to lead to the same state whatever is the state). They do not depend on the sequential specification of the object, so they give very little constraints on the histories. For example, an implementation that ignores all the updates is strong eventually consistent, as all the queries return the initial state. In update consistency, we impose the existence of a total order on the updates, that contains the program order and that leads to the consistent state according to the abstract data type. Another equivalent way to approach update consistency is that, if the number of updates is finite, it is possible to remove a finite number of queries such that the history is sequentially consistent.

\begin{definition}[Update consistency]
A history $H$ is update consistent (UC) if $U_H$ is infinite or if there exists a 
finite set of queries $Q' \subset Q_H$ such that 
$\lin\left(H_{E\setminus Q'}\right) \cap L(O) \neq \emptyset$.
\end{definition}

The history of Fig. \ref{figure:tout sauf suc} is update consistent because the sequence of operations $\mathrm{I}(1)\mathrm{I}(2)$ 
is a possible explanation for the state $\{1, 2\}$. The history of Fig. \ref{figure:up_inc_history} is not update consistent because 
any linearization of the updates would position a deletion as the last event. Only three consistent states are actually possible:
state $\emptyset$, e.g. for the linearization $\mathrm{I}(1)\cdot\mathrm{I}(2)\cdot\mathrm{D}(1)\cdot\mathrm{D}(2)$, 
state $\{1\}$ for the linearization $\mathrm{I}(2)\cdot\mathrm{D}(1)\cdot\mathrm{I}(1)\cdot\mathrm{D}(2)$ and state $\{2\}$
for the linearization $\mathrm{I}(1)\cdot\mathrm{D}(2)\cdot\mathrm{I}(2)\cdot\mathrm{D}(1)$. 
Update consistency is incomparable with strong eventual consistency.

\pagebreak
\paragraph{Strong update consistency} strong update consistency is a strengthening of both update consistency and strong
eventual consistency. The relationship between update consistency and strong update
consistency is analogous to the relation between eventual consistency and strong 
eventual consistency.

\vspace{-1mm}
\begin{definition}[Strong update consistency]
A history $H$ is strong update consistent (SUC) if there exists (1) an acyclic and reflexive 
relation $\xrightarrow{vis}$ that contains $\mapsto$ and (2) a total order $\le$ that contains $\xrightarrow{vis}$ such that:
\vspace{-1mm}
  \begin{itemize}
  \item Eventual delivery: \\$\forall u\in U_H, \{e\in E, u \hspace{1.5mm}\not\hspace{-1.5mm}\xrightarrow{vis} e \} \text{ is finite};$
  \item Growth: \\$\forall e, e', e''\in E, \left(e \xrightarrow{vis} e' \land e' \mapsto e''\right) \Rightarrow \left(e \xrightarrow{vis} e''\right);$
  \item Strong sequential convergence: A query views an update if this update precedes it according to
    $\xrightarrow{vis}$. Each query is the result of the ordered execution, according to $\le$, of the updates it views:\\
    $\forall q\in Q_H, \lin\left(H_{V(q) \cup \{q\}}^\le\right) \cap L(O) \neq \emptyset$ \\where $V(q) = \{u\in U_H : u \xrightarrow{vis} q\}.$
  \end{itemize}
\end{definition}

\vspace{-1mm}
Fig. \ref{figure:suc} shows an example of strong update consistent history: 
nothing prevents the second process from seeing the insertion of $2$ before that of $1$. 
Strong eventual consistency and update consistency does not imply strong update consistency: 
in the history of Fig. \ref{figure:tout sauf suc}, after executing event $\mathrm{I}(1)$, 
the only three possible update linearizations are $\mathrm{I}(1)$, $\mathrm{I}(1)\cdot\mathrm{I}(2)$ 
and $\mathrm{I}(2)\cdot\mathrm{I}(1)$ and none of them can lead to the state $\emptyset$ according to 
the sequential specification of a set object. So the history of Fig. \ref{figure:tout sauf suc} 
is not strong update consistent, while it is update consistent and strong eventually consistent.

\vspace{-1mm}
\begin{proposition}[Comparison of consistency criteria]
If a history $H$ is update consistent, then it is eventually consistent. If $H$ is strong update consistent, then it is both   strong eventually consistent and update consistent.
\end{proposition}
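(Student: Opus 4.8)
The plan is to prove the two implications separately, using the definitions of the four criteria directly.

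\textbf{Update consistency implies eventual consistency.} Let $H$ be update consistent. If $U_H$ is infinite, then $H$ is eventually consistent by definition, so assume $U_H$ is finite. By update consistency there is a finite set $Q' \subset Q_H$ and a linearization $\ell \in \lin(H_{E\setminus Q'}) \cap L(O)$. The plan is to extract from $\ell$ the ``final state'' of the object after all updates have been applied: since $U_H$ is finite, only finitely many events of $\ell$ precede the last update, so running the transition function $T$ along $\ell$ stabilizes at some state $s \in S$ after that point. I would then argue that every query event in $E \setminus Q'$ that occurs after the last update in $\ell$ is evaluated by $G$ in state $s$, hence returns a consistent value; only finitely many query events lie before the last update, and $Q'$ is finite, so the set $\{q_i/q_o \in Q_H : G(s,q_i)\neq q_o\}$ is finite. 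This gives eventual consistency. The one subtlety to handle carefully is that not every query need appear after the last update in the chosen linearization, but the number of ``early'' ones is bounded because there are finitely many updates and a linearization interleaves each query with the updates; I would make this counting precise.

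\textbf{Strong update consistency implies update consistency.} Let $H$ be strong update consistent, witnessed by $\xrightarrow{vis}$ and a total order $\le$. If $U_H$ is infinite we are done, so assume $U_H$ is finite. The idea is to use $\le$ restricted to the updates as the backbone of a global linearization and then insert each query at an appropriate place. Concretely, by eventual delivery every update is viewed by all but finitely many events; since $U_H$ is finite, there are only finitely many query events $q$ for which $V(q) \neq U_H$. Let $Q'$ be this finite set together with any finitely many more queries we need to discard. For each remaining query $q$ we have $V(q) = U_H$, and strong sequential convergence gives a linearization of $H^{\le}_{U_H \cup \{q\}}$ in $L(O)$; the plan is to show these local linearizations are mutually compatible because they all order the updates by the same total order $\le$ and each such linearization must put $q$ after all of $U_H$ (as $U_H = V(q)$ and $\le$ extends $\xrightarrow{vis} \supseteq \mapsto$). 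Hence we can form a single sequential history: list the updates in $\le$-order, then append all the surviving queries in any order consistent with $\mapsto$; I would verify this word lies in $L(O)$ (the state after the updates is fixed, and each appended query reads that state correctly) and is a linearization of $H_{E\setminus Q'}$.

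\textbf{Strong update consistency implies strong eventual consistency.} This is the most delicate part and I expect it to be the main obstacle. Given the witnessing $\xrightarrow{vis}$ and $\le$ for SUC, I would reuse the same $\xrightarrow{vis}$ as the visibility relation for SEC; eventual delivery and growth are literally the same conditions in both definitions, so they transfer for free. The work is in establishing strong convergence: given $V \subset U_H$, I must produce a single state $s \in S$ such that every query $q$ with $\{u \in U_H : u \xrightarrow{vis} q\} = V$ satisfies $G(s, q_i) = q_o$. The plan is to define $s$ by running $T$ along the $\le$-ordered sequence of the updates in $V$ (this is well defined since $\le$ is a total order and, because $\{e' : e' \mapsto e\}$ is finite and $\le$ extends a program order, every element of $V$ has finitely many $\le$-predecessors within $V$ — I would need $V$ itself to be finite or at least well-ordered by $\le$; I would check that the ``finite past'' hypothesis on $\mapsto$ plus eventual delivery forces the relevant $V$'s to be finite, or otherwise argue that $s$ is still well defined as a limit). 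Then for any query $q$ with $V(q) = V$, strong sequential convergence yields a linearization of $H^{\le}_{V \cup \{q\}}$ in $L(O)$; since that linearization orders $V$ by $\le$ and places $q$ after exactly the updates it views, the state reached just before $q$ in it is precisely $s$, so $G(s, q_i) = q_o$. Matching the order in which $T$ is applied in my definition of $s$ with the order forced by the linearization is the crux, and that is where I would spend the most care.
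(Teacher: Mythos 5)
Your proposal is correct and follows essentially the same route as the paper: the same three implications, with the final state read off a finite prefix of the linearization containing all updates for UC~$\Rightarrow$~EC, a cofinite set of queries placed after the $\le$-ordered updates for SUC~$\Rightarrow$~UC, and the reuse of $\xrightarrow{vis}$ together with the state obtained by executing $V$ in $\le$-order for SUC~$\Rightarrow$~SEC. Your only deviation is the choice of $Q'$ in the second implication (queries with $V(q)\neq U_H$ rather than queries $\le$ some update), which is a harmless --- indeed slightly more careful --- variant of the paper's argument.
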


\vspace{-2mm}
\begin{proof}
Suppose $H$ is update consistent. If $H$ contains an infinite number of updates, then it
is eventually consistent. Otherwise, there exists a finite set $Q'\subset Q_H$ and a 
word $w\in\lin\left(H_{E_H\setminus Q'}\right) \cap L(O)$. As the number of updates is finite, 
there is a finite prefix $v$ of $w$ that contains them all. $v\in L(O)$, so it labels a 
path between $s_0$ and a state $s$ in the UQ-ADT. All the queries that are in $w$ but not 
in $v$ return the same state $s$, and the number of queries in $Q'$ and $v$ is finite. 
Hence, $H$ is eventually consistent.

\vspace{-1mm}
Suppose $H$ is strong update consistent with a finite number of updates.
\linebreak[4] $Q' = \bigcup_{u\in U_H}\{q\in Q_H, q \le u\}$ is finite, and $\lin\left(E_H\setminus Q'\right)$ 
contains only one word that is also contained into $L(O)$. Obviously, $H$ is update consistent

\vspace{-1mm}
Now, suppose $H$ is strong update consistent. Strong update consistency 
respects both eventual delivery and growth properties. Let $V \subset U_H$. As the relation $\le$ is a total order, 
there is a unique word $w$ in $\lin\left(H_V^\le\right) \cap L(O)$. Let us denote $s$ the state obtained after the execution of $w$. 
For all $q\in Q_H$ such that $V = \{u\in U_H : u\xrightarrow{vis} q\}$, $\lin\left(H_{E_q \cup \{q\}}^\le\right) \cap L(O) = \{w\cdot\Lambda(q)\}$, 
so $q=q_i/q_o$ with $G(s, q_i) = q_o$. Consequently, $H$ is strong eventually consistent.
\end{proof}

\section{Expressiveness of Update Consistency: a Case Study}
\label{section:set}

The set is one of the most studied eventually consistent data structures. Different types of sets have been proposed as extensions to CRDTs to implement eventually consistent sets even though the insert and delete operations do not commute. The simplest set is the Grow-Only Set (G-Set) \cite {shapiro2011comprehensive}, in which it is only possible to insert elements. As the insertion of two elements commute, G-Set is a CRDT. Using two G-Set, a white list for inserted elements and a black list for the deleted ones, it is possible to build a Two-Phases Set (2P-Set, a.k.a. U-Set, for Unique Set) \cite{wuu1986efficient}, in which it is possible to insert and remove elements, but never insert again an element that has already been deleted. Other implementations such as C-Set \cite {aslan2011c} and PN-Set, add counters on the elements to determine if they should be present or not. The Observe-Remove Set (OR-Set) \cite{shapiro2011comprehensive,mukund2014optimized} is the best documented algorithm for the set. It is very close to the 2P-Set in its principles, but each insertion is timestamped with a unique identifier, and the deletion only black-lists the identifiers that it observes. It guaranties that, if an insertion and a deletion of the same element are concurrent, the insertion will win and the element will be added to the set. Finally, the last-writer-wins element set (LWW-element-Set) \cite {shapiro2011comprehensive} attaches a timestamp to each element to decide which operation should win in case of conflict. All these sets, and the eventually consistent objects in general, have a different behavior when they are used in distributed programs.

The above mentioned implementations are eventually consistent. However, as eventual consistency does not impose a semantic 
link between updates and queries, it is hazardous to say anything on the conformance to the specification of the object. 
Burckhardt \emph{et al.} \cite{burckhardt2014replicated} propose to specify the semantics of a query by a function 
on its concurrent history, called \emph{visibility}, that corresponds to the visibility relation in strong eventual consistency,
and a linearization of this history, called \emph{arbitration}. 
In comparison, sequential specifications are restricted to the arbitration relation. It implies that fewer update consistent objects 
than eventually consistent objects can be specified. Although the variety of objects with a distributed specification seems to be a 
chance that compensates the lower level of abstraction it allows, an important bias must be taken into account: from the point of 
view of the user, the visibility of an operation is not an \emph{a priori} property of the system, but an \emph{a posteriori} 
way to explain what happened. If one only focuses on the final state, an update consistent object is appropriate to be used 
instead of an eventually consistent object, since the final state is the same as if no operations were concurrent.

By adding further constraints on the histories, concurrent specifications strengthen the consistency criteria. Even if strong update consistency is stronger than strong eventual consistency, we cannot say in general that a strong update consistent object can always be used instead of its strong eventually 
consistent counterpart. We claim that this is true in practice for \emph{reasonable} objects, and we prove this in the case of the Insert-wins set (the concurrent specification of the OR-set). The arbitration relation is not used for the OR-set, and the visibility relation has already been defined for strong eventual consistency. The concurrent specification only adds one more constraint on this relation: an element is present in the set 
if and only if it was inserted and is not yet deleted.

\begin{definition}[Strong eventual consistency for the Insert-wins set]
A history $H$ is strong eventually consistent for the Insert-wins set on a support $\mathit{Val}$ if it is strong eventually consistent for the set $\mathcal{S}_{\mathit{Val}}$ and the visibility relation $\xrightarrow{vis}$ verifies the following additional property. For all $x\in \mathit{Val}$ and $q\in Q_H$, with $\Lambda(q) = R/s$, $x\in s \Leftrightarrow \left(\exists u\in vis(q, \mathrm{I}(x)), \forall u'\in vis(q, \mathrm{D}(x)), u\hspace{1.5mm}\not\hspace{-1.5mm}\xrightarrow{vis} u'\right)$, where for all  $o \in U$, $vis(q, o) = \{u\in U_H : u \xrightarrow{vis} q \land \Lambda(u) = o\}$.
\end{definition}

The OR-Set implementation of a set is not update consistent. The history on Fig. \ref{figure:up_inc_history} is not update consistent, as the last operation must be a deletion. However, if the updates made by a process are not viewed by the other process before it makes its own updates, the insertions will win and the OR-set will converge to $\{1, 2\}$. On the contrary, a strong update consistent implementation of a set can always be used instead of an Insert-wins set, as it only forbids more histories.

\begin{proposition}[Comparison with Insert-wins set]
  Let $H = (U, Q, E, \Lambda, \mapsto)$ be a history that is strong update
consistent for $\mathcal{S}_{\mathit{Val}}$. 
  Then $H$ is strong eventually consistent for the Insert-wins set.
\end{proposition}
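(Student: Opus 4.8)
The plan is to convert the witnesses of strong update consistency into a visibility relation that certifies strong eventual consistency for the Insert-wins set. Let $\xrightarrow{vis}$ and $\le$ be the visibility relation and the total order witnessing that $H$ is strong update consistent for $\mathcal{S}_{\mathit{Val}}$, and write $V(q)=\{u\in U_H:u\xrightarrow{vis}q\}$. First I would record two facts. (i) Each $V(q)$ is finite: otherwise $q$ would have infinitely many $\le$-predecessors, $\lin\!\left(H^{\le}_{V(q)\cup\{q\}}\right)$ would be empty and strong sequential convergence would fail. (ii) Since $\le$ is total, $\lin\!\left(H^{\le}_{V(q)\cup\{q\}}\right)$ is the single word $u_1\cdots u_n\,\Lambda(q)$ with $u_1\le\cdots\le u_n$ enumerating $V(q)$, and it lies in $L(O)$; so if $\Lambda(q)=\mathrm{R}/s$ then, by the semantics of $\mathcal{S}_{\mathit{Val}}$, $x\in s$ exactly when $V(q)$ contains an operation on $x$ and its $\le$-greatest operation on $x$ is an insertion.

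\textbf{The relation.} Let $A=\{(u,u')\in U_H\times U_H:u\le u'\text{ and }\Lambda(u)=\mathrm{I}(x),\ \Lambda(u')=\mathrm{D}(x)\text{ for some }x\in\mathit{Val}\}$, and let $\xrightarrow{vis'}$ be the smallest relation that contains $\xrightarrow{vis}\cup A$ and is closed under the growth rule (equivalently, $e\xrightarrow{vis'}f$ iff there is $g$ with $(e,g)\in(\xrightarrow{vis}\cup A)$ and $\bigl(g\mapsto f\text{ or }g=f\bigr)$). The easy checks: $\xrightarrow{vis'}$ is reflexive, contains $\mapsto$ and contains $\xrightarrow{vis}$; it is contained in $\le$ (each generator is, and $\mapsto\subseteq\le$), hence acyclic; growth holds by construction; eventual delivery passes from $\xrightarrow{vis}$ because $\xrightarrow{vis'}\supseteq\xrightarrow{vis}$; and $V'(q)=\{u:u\xrightarrow{vis'}q\}$ is finite, since any $u\in V'(q)\setminus V(q)$ enters only through a pair $(u,g)\in A$ with $g\mapsto q$, and for each of the finitely many updates $g$ with $g\mapsto q$ the set $\{u:u\le g\}$ is finite (apply eventual delivery to $g$).

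\textbf{The core: the Insert-wins property.} I would prove that for every $q$ with $\Lambda(q)=\mathrm{R}/s$ and every $x$, $x\in s$ iff some $\mathrm{I}(x)\in V'(q)$ is $\xrightarrow{vis'}$-below no $\mathrm{D}(x)\in V'(q)$. The key structural observations are about $V'(q)\setminus V(q)$: it contains no deletion (a deletion is never the left component of a pair of $A$, so it could only enter via a $\xrightarrow{vis}$-edge, impossible since $\xrightarrow{vis}$ already satisfies growth and $g\mapsto q$ or $g=q$ would put it in $V(q)$), and every $\mathrm{I}(x)$ it contains is $\xrightarrow{vis'}$-below some $\mathrm{D}(x)$ that is forced into $V(q)$ (such an insertion enters only via a pair $(\mathrm{I}(x),\mathrm{D}(x))\in A$ whose second component $\mapsto q$). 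Hence in the formula the witness insertion may be taken in $V(q)$ and the universally quantified deletions range exactly over the deletions on $x$ in $V(q)$. One direction is then immediate from $\xrightarrow{vis'}\subseteq\le$: when $x\in s$, take the $\le$-greatest operation on $x$ in $V(q)$, which is an insertion and $\le$-above every deletion on $x$ in $V(q)$. For the converse, if $x\notin s$ then the $\le$-greatest operation $m$ on $x$ in $V(q)$ is a $\mathrm{D}(x)$, so each $\mathrm{I}(x)=u\in V(q)$ has $u\le m$, whence $(u,m)\in A\subseteq\xrightarrow{vis'}$ with $m\in V(q)$, so no insertion qualifies. This gives the Insert-wins property; strong convergence for $\mathcal{S}_{\mathit{Val}}$ then follows since, by Insert-wins, the value returned by $q$ is a function of $V'(q)$ and of $\xrightarrow{vis'}$ restricted to updates, so queries with the same visible past return the same (finite) set. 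With the easy checks this shows $H$ is strong eventually consistent for $\mathcal{S}_{\mathit{Val}}$, hence for the Insert-wins set.

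\textbf{Main obstacle.} The delicate point is choosing $\xrightarrow{vis'}$ so that it is at once rich enough on pairs of updates to force the ``insert wins'' test the right way, and lean enough that closing under growth does not drag spurious operations into the $V'(q)$ and corrupt the per-element verdict; the two lemmas on $V'(q)\setminus V(q)$ are exactly what make the choice $A=\{\text{insert-before-delete of the same element}\}$ work, and pinning them down — rather than the bookkeeping of the set semantics — is where the care is needed. A secondary point to watch is finiteness of the $V'(q)$, which relies on the program-order finiteness axiom together with eventual delivery.
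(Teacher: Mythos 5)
Your proposal is correct and follows essentially the same route as the paper's proof: both enrich the SUC visibility relation with $\le$-edges between updates on the same element (so that the insert-wins test reduces to ``the $\le$-last operation on $x$ among the visible ones is an insertion'') and then check the SEC axioms for the enriched relation. Your execution is in fact more careful than the paper's --- you add only the insert-before-delete pairs and close explicitly under the growth rule, and your two lemmas on $V'(q)\setminus V(q)$ supply the verification of the insert-wins formula that the paper leaves implicit (the paper's own relation, which merely asserts that growth ``is ensured by the fact that it contains $\xrightarrow{vis}$'', does not obviously satisfy growth for the added update--update edges).
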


\pagebreak

\begin{proof}
  Suppose $H$ is strong update consistent for $\mathcal{S}_{\mathit{Val}}$. 
  We define the new relation $\xrightarrow{IW}$ such that 
  for all $e, e' \in E$, $e\xrightarrow{IW}e'$ if one of the following conditions
  holds:
  \begin{itemize}
    \item $e\xrightarrow{vis} e'$;
    \item $e$ and $e'$ are two updates on the same element and $e\le e'$;
    \item $e'$ is a query, and there is an update $e''$ such that $e\xrightarrow{IW}
      e''$ and $e''\xrightarrow{IW} e'$.
  \end{itemize}
The relation $\xrightarrow{IW}$ is acyclic because it is included in $\le$, its growth and eventual delivery properties are ensured by the fact that it contains $\xrightarrow{vis}$. Moreover, no two updates for the same element are concurrent according to $\xrightarrow{IW}$ and the  last updates are also the last for the $\le$ relation, consequently $H$ is strong eventually consistent for the Insert-wins set.
\end{proof}

This result implies that an OR-set can always be replaced by an update consistent set, because the guaranties it ensures are weaker than those of the update consistent set. It does not mean that the OR-set is worthless. It can be seen as a cache consistent set \cite{goodman1991cache} that, in some cases may have a better space complexity than update consistency.

\vspace{-2mm}
\section{Generic Construction of Strong Update Consistent Objects}
\vspace{-2mm}
\label{section:implementation}

In this section, we give a generic construction of strong update consistent objects in crash-prone asynchronous message-passing systems. This construction is not the most efficient ever as it is intended to work for any UQ-ADT object in order to prove the universality of update consistency. For a specific object an ad hoc implementation on a specific system may be more suitable.

\vspace{-2mm}
\subsection{System Model}
\vspace{-2mm}

We consider a message-passing system composed of finite set of sequential
\emph{processes} that may fail by halting. A faulty process simply stops operating.
A process that does not crash during an execution is correct. We make no assumption on
the number of failures that can occur during an execution. Processes communicate by
exchanging \emph{messages} using a communication network complete and reliable. A
message sent by a correct process to another correct process is eventually received.
The system is asynchronous; there is no bound on the relative speed of processes nor
on the message transfer delays. In such a situation a process cannot wait for the
participation of any a priori known number of processes as they can fail.
Consequently, when an operation on a replicated object is invoked locally at some
process, it needs to be completed based solely on the local knowledge of the process. 
We call this kind of systems wait-free asynchronous message-passing system.

\vspace{-1mm}
We model executions as histories made up of the sequences of events generated by the different processes. As we focus on shared objects and their implementation, only two kinds of actions are considered: the operations on shared objects, that are seen as events in the distributed history, and message receptions.

\vspace{-2mm}
\subsection{A universal implementation}
\vspace{-2mm}

\begin{algorithm}[t]
  \SetKw{Var}{var}
  \SetKw{Fun}{fun}
  \SetKw{Receive}{on receive}
  \SetKw{Broadcast}{broadcast}

  \SetKw{True}{true}
  \SetKw{False}{false}

  \SetKwData{Kwset}{update}
  \SetKwData{Kwstate}{state}
  \SetKwData{Kwclock}{clock}

  \SetKwData{Kwu}{u}
  \SetKwData{Kwq}{q}
  \SetKwData{Kwc}{cl}
  \SetKwData{Kwj}{j}
  \SetKwData{Kws}{state}

  \SetKwFunction{KwPid}{pid}
  
  \SubAlgo{\textbf{object} $(U, Q_i, Q_o, S, s_0, T, G)$}{
    \Var $\Kwclock_i \in \mathbb{N} \leftarrow 0$\;
    \Var $\Kwset_i \subset \left(\mathbb{N} \times \mathbb{N} \times U\right) \leftarrow \emptyset$\;

    \SetKwFunction{Kwapply}{update}
    \SetKwFunction{KwMessage}{message}
    \SubAlgo{\Fun \Kwapply $(\Kwu \in U)$}{
      $\Kwclock_i \leftarrow \Kwclock_i + 1$\;
      \Broadcast \KwMessage $(\Kwclock_i, i, \Kwu)$\;
    }

    \SetKwFunction{Kwinsert}{insertSort}
    \SubAlgo{\Receive \KwMessage $(\Kwc \in \mathbb{N}, \Kwj \in \mathbb{N}, \Kwu \in Q)$}{
      $\Kwclock_i \leftarrow \max(\Kwclock_i, \Kwc)$\;
      $\Kwset_i \leftarrow \Kwset_i \cup \{(\Kwc, \Kwj, \Kwu)\}$\;
    }

    \SetKwFunction{Kwread}{query}
    \SubAlgo{\Fun \Kwread $(\Kwq\in Q_i) \in Q_o$}{
      $\Kwclock_i \leftarrow \Kwclock_i + 1$\;
      \Var $\Kws_i \in S \leftarrow s_0$\;
      \For{$(\Kwc, \Kwj, \Kwu) \in \Kwset_i$ sorted on $(\Kwc, \Kwj)$}{
        $\Kws_i \leftarrow T(\Kws_i, \Kwu)$\;
      }
      \Return $G(\Kws_i, \Kwq)$\;
    }
  }
  \caption{a generic UQ-ADT (code for $p_i$)}
  \label{algo:transition system}
\end{algorithm}

\vspace{-1mm}
Now, we prove that strong update consistency is universal, in the sense that every UQ-ADT
has a strong update consistent implementation in a wait-free asynchronous system. 
Algorithm \ref{algo:transition system} presents an implementation of a generic UQ-ADT. The principle is to 
build a total order on the updates on which all the participants agree, and then to rewrite the history \emph{a
posteriori} so that every replica of the object eventually reaches the state corresponding to the
common sequential history. Any strategy to build the total order on the updates
would work. In Algorithm \ref{algo:transition system}, this order is built from a Lamport's
clock \cite{lamport1978time} that contains the happened-before precedence relation. Process order is hence respected. 
A logical Lamport's clock is a pre-total order as some events may be associated with the same logical time. 
In order to have a total order, the events are timestamped with a pair composed of the logical time and the 
id of the process that produced it (process ids are assumed unique and totally ordered). The algorithm
actions performed by a process $p_i$ are atomic and totally ordered by an order $\mapsto_i$.
The union of these orders for all processes is the program order $\mapsto$. 

\vspace{-1mm}
At the application level, a history is composed of update and query operations. In
order to allow only strong update consistent histories, Algorithm \ref{algo:transition system} proposes
a procedure $update()$ and a function $query()$. A history $H$ is allowed
by the algorithm if $update(u)$ is called each time a process performs an update $u$,
and $query(q_i)$ is called and returns $q_o$ when the event $q_i/q_o$ appears in the history.
The code of Algorithm \ref{algo:transition system} is given for process $p_i$.
Each process $p_i$ manages its view $clock_i$ of the logical clock and a list
$updates_i$ of
all timestamped update events process $p_i$ is aware of. The list $updates_i$ contains
triplets $(cl,j,u)$ where $u$ is an update event and $(cl,j)$ the associated
timestamp. This list is sorted according to the timestamps of the updates:
$(cl,j)<(cl',j')$ if ($cl<cl'$) or ($cl=cl'$ and $j<j'$).

\vspace{-1mm}
The algorithm timestamps all events (updates and queries). When 
an update is issued locally, process $p_i$ informs all the other 
processes by reliably broadcasting a message to all other processes 
(including itself). Hence, all processes will eventually be aware of 
all updates. When a $message(cl, j, u)$ is received, $p_i$ updates its clock and 
inserts the event to the list $updates_i$.
When a query is issued, the function $query()$ replays locally the whole list of update 
events $p_i$ is aware of starting from the initial state then it executes the query on the
state it obtains.

\vspace{-1mm}
Whenever an operation is issued, its is completed without waiting for any other process.
This corresponds to wait-free executions in shared memory distributed systems and
implies fault-tolerance.

\vspace{-1mm}
\begin{proposition}[Strong update consistency]
  All histories allowed by Algorithm \ref{algo:transition system} are strong update consistent. 
\end{proposition}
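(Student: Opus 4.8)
The plan is to construct explicit witnesses for the visibility relation $\xrightarrow{vis}$ and the total order $\le$ required by the definition of strong update consistency, reading them off the Lamport timestamps maintained by Algorithm~\ref{algo:transition system}. To every event $e$ of a history $H$ allowed by the algorithm, produced by process $p_k$, I associate the timestamp $L(e)=(cl_e,k)$, where $cl_e$ is the value of $clock_k$ right after the increment performed at the start of the matching $update$ or $query$ call (for an update, $cl_e$ is also the clock component broadcast in its message). Since a process's clock strictly increases along its local sequence of atomic steps, two events of the same process get distinct clock components and two events of different processes are separated by the process identifier, so $L$ is injective; I then let $e\le e'$ iff $L(e)$ is lexicographically at most $L(e')$, which is a total order on $E$. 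For visibility I set $e\xrightarrow{vis}e'$ iff $e=e'$, or $e\mapsto e'$, or $e$ is an update whose message has been delivered at $e'$'s process before the step executing $e'$; here I use that $update(u)$ broadcasts $u$'s message to every process including the sender, and I take this self-delivery to occur before $update(u)$ returns, which is the intended reading that makes process order genuinely respected.

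The routine part checks that these relations have the right shape. The key monotonicity lemma is that $e\xrightarrow{vis}e'$ with $e\neq e'$ implies $cl_e<cl_{e'}$: when $e\mapsto e'$ the clock is incremented at least once between them, and when $e$'s message reaches $p_{k'}$ before $e'$, that delivery sets $clock_{k'}\ge cl_e$ and $e'$ then increments it again. Hence $e\xrightarrow{vis}e'$ with $e\neq e'$ forces $L(e)<L(e')$, so $\xrightarrow{vis}$ has no cycle through distinct events (it is acyclic), it is reflexive and contains $\mapsto$ by construction, and it is contained in $\le$ by the lemma (which also yields $\mapsto\subseteq\le$). Eventual delivery follows from reliability of the broadcast: every correct process eventually receives the message of any $u\in U_H$, so only finitely many of its events fail to view $u$; each faulty process issues finitely many events, and there are finitely many processes. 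Growth is a short case analysis: if $e\xrightarrow{vis}e'$ holds through $e\mapsto e'$, then $e\mapsto e''$ by transitivity; if it holds through delivery of $e$'s message before $e'$, then since $e'\mapsto e''$ puts $e''$ later on the same process, that message is delivered before $e''$ too.

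The substantive step is strong sequential convergence. Fix a query event $q=q_i/q_o$ issued by $p_k$, and let $V(q)=\{u\in U_H:u\xrightarrow{vis}q\}$. By the definition of $\xrightarrow{vis}$, this is exactly the set of updates whose triples are in $updates_k$ when $query(q_i)$ is executed, and it is finite because $query(q_i)$ runs after a finite prefix of $p_k$'s computation. The monotonicity lemma gives $L(u)<L(q)$ for every $u\in V(q)$; since $\le$ is total, $\lin\left(H_{V(q)\cup\{q\}}^\le\right)$ is the single word listing the updates of $V(q)$ in increasing timestamp order followed by $\Lambda(q)$ --- exactly the order in which the loop of $query$ replays them. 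Writing $s$ for the state that loop computes from $s_0$, the algorithm returns $G(s,q_i)$, so $q_o=G(s,q_i)$; the word above is then recognised by $O$ as a prefix of the $\omega$-word obtained by appending $(q_i/q_o)^\omega$ (from $s$ the state never changes and $G(s,q_i)=q_o$ at each step), hence it belongs to $L(O)$ and $\lin\left(H_{V(q)\cup\{q\}}^\le\right)\cap L(O)\neq\emptyset$.

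I expect the only real difficulty to be bookkeeping rather than ideas: one must pin down precisely \emph{when} a process becomes aware of each update --- in particular the self-delivery of its own broadcast --- so that $V(q)$ coincides exactly with the set of updates replayed inside $query$, and one must keep the timestamp inequalities oriented so that in $\le$ every update viewed by a query precedes that query. Once the Lamport order is chosen as the arbitration order $\le$, the correctness of the replay loop is immediate, since the loop sorts on the very pair $(cl,j)$ that defines $\le$.
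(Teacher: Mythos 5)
Your proof is correct and follows essentially the same route as the paper's: both take $\le$ to be the lexicographic order on the Lamport timestamp pair $(cl,\text{pid})$, define visibility by message reception (with self-delivery covering program order), and obtain strong sequential convergence by observing that the query's replay loop enumerates exactly the visible updates in exactly the $\le$ order. Your write-up is somewhat more explicit than the paper's on the eventual-delivery and growth checks and on why the resulting finite word lies in $L(O)$, but there is no difference in the underlying argument.
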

\begin{proof}
  Let $H = (U, Q, E, \Lambda, \mapsto)$ be a distributed history allowed by Algorithm \ref{algo:transition system}. 
  Let $e, e' \in E_H$ be two operations invoked by processes $p_i$ and $p_{i'}$, 
  on the states $(\verb+update+, \verb+clock+)$ and $(\verb+update+', \verb+clock+')$, respectively. We pose:
  \vspace{-1mm}
  \begin{itemize}
  \item $e \xrightarrow{vis} e'$ if $e\in U_H$ and $p_{i'}$ received the message sent during the execution of $e$ 
    before it starts executing $e'$, or $e\in Q_H$ and $e\mapsto e'$. As the messages are received instantaneously 
    by the sender, $\xrightarrow{vis}$ contains $\mapsto$. It is growing because the set of messages received by a 
    process is growing with time.
\pagebreak
  \item $e\le e'$ if $c < c'$ or $c=c'$ and $i < i'$. This lexical order is total
    because two operations on the same 
    process have a different clock. Moreover it contains $\xrightarrow{vis}$ because when
    $p_{i'}$ received the message sent by $e$,
    it executed line 9 and when it executed $e'$, it executed line 5, so $c' \ge
    c+1$. 
    Moreover, the history of $e$ contains at most $c\times n + i$ events, where $n$
    is the number of processes, so it is finite.
  \end{itemize}
  \vspace{-2mm}
  Let $q\in Q_H$ and $E_q = \{u\in U_H : u \xrightarrow{vis} q\}$. 
  Lines 15 to 18 build an explicit sequential execution, that is in $\lin\left(H_{E_q \cup \{q\}}^\le\right)$ 
  by definition of $\le$ and in $L(O)$ by definition of $O$.
\end{proof}

\vspace{-2mm}
\subsection{Complexity}
\vspace{-2mm}

Algorithm \ref{algo:transition system} is very efficient in terms of 
network communication. A unique message is broadcast for each update and 
each message only contains the information
to identify the update and a timestamp composed of two integer values, that
only grow logarithmically with the number of processes and the number of 
operations. Moreover, this algorithm is wait-free and its execution does not depend on
the latency of the network.

  \vspace{-1mm}
This algorithm re-executes all past updates each time a new query is issued. In an effective implementation, a process can keep intermediate states. These intermediate states are re-computed only if very late message arrive. The algorithm does not look space efficient also as the whole history must be kept in order to rebuild a sequential history. 
Because data space is cheap and fast nowadays, compared to bandwidth, many applications can afford this complexity and would keep this information anyway. For example, banks keep track of all the operations made on an account for years for legal reasons. In databases systems, it is usual to record all the events in log files. Moreover, asynchrony is used as a convenient abstraction for systems in which transmission delays are actually bounded, but the bound is too large to be used in practice. This means that after some time old messages can be garbage collected.

\begin{algorithm}[t]
  \SetKw{Var}{var}
  \SetKw{Fun}{fun}
  \SetKw{Receive}{on receive}
  \SetKw{Broadcast}{broadcast}

  \SetKwData{Kwstate}{mem}
  \SetKwData{Kwclock}{clock}

  \SetKwData{Kwc}{cl}
  \SetKwData{Kwj}{j}
  \SetKwData{Kwx}{x}
  \SetKwData{Kwv}{v}
  \SetKwData{Kws}{state}

  \SetKwFunction{KwPid}{pid}
  
  \SubAlgo{\textbf{object} $\text{UC\_mem}(X, V, v_0)$}{
    \Var $\Kwclock_i \in \mathbb{N} \leftarrow 0$\;
    \Var $\Kwstate_i \in \text{mem}(X, (\mathbb{N}^2 \times V), (0, 0, v_0))$\;

    \SetKwFunction{Kwapply}{write}
    \SetKwFunction{KwMessage}{msg}
    \SetKwFunction{Kwread}{read}
    \SubAlgo{\Fun \Kwapply $(\Kwx \in X, \Kwv \in V)$}{
      $\Kwclock_i \leftarrow \Kwclock_i + 1$\;
      \Broadcast \KwMessage $(\Kwclock_i, i, \Kwx, \Kwv)$\;
    }
    \SubAlgo{\Receive \KwMessage $(\Kwc \in \mathbb{N}, \Kwj \in \mathbb{N}, \Kwx \in X, \Kwv \in V)$}{
      $\Kwclock_i \leftarrow \max(\Kwclock_i, \Kwc)$\;
      \Var $(\Kwc', \Kwj', \Kwv')\in \mathbb{N}^2\times V \leftarrow \Kwstate_i.\Kwread(\Kwx)$\;
      \If{$(\Kwc', \Kwj') < (\Kwc, \Kwj)$} {
        $\Kwstate_i.\Kwapply(\Kwx, (\Kwc, \Kwj, \Kwv))$
      }
    }

    \SubAlgo{\Fun \Kwread $(\Kwx \in X) \in V$}{
      \Var $(\Kwc, \Kwj, \Kwv)\in \mathbb{N}^2\times V \leftarrow \Kwstate_i.\Kwread(\Kwx)$\;
      \Return $\Kwv$\;
    }
  }
  \caption{the shared memory (code for $p_i$)}
  \label{algo:memory}
\end{algorithm}

  \vspace{-1mm}
The proposed algorithm is a theoretical work whose goal is to prove that any update-query object has a strong update consistent implementation. This genericity prevents an effective implementation that may take benefit from the nature and the specificity of the actual object. The best example of this are pure CRDTs like the counter and the grow-only set. If all the update operations commute in the sequential specification, all linearizations would lead to the same state so a naive implementation, that applies the updates on a replica as soon as the notification is received, achieves update consistency. In \cite{KBL93}, Karsenty and Beaudouin-Lafon propose an algorithm to implement objects such that each update operation $u$ contains an \emph{undo} $u^{-1}$ such that for all $s$, $T(T(s, u), u^{-1}) = s$. This algorithm is very close to ours as it builds the convergent state from a linearization of the updates stored by each replica. They use the undo operations to position newly known updates at their correct place, which saves computation time. As it is a very frequent example in distributed systems, we now focus on the shared memory object.

  \vspace{-1mm}
Algorithm \ref{algo:memory} shows an update consistent implementation of the shared memory object. 
A shared memory offers a set $X$ of registers that contain values taken from a set $V$. 
The query operation $\text{read}(x)$, where $x\in X$, returns the last value $v\in V$ 
written by the update operation $\text{write}(x, v)$, or the initial value $v_0\in V$ if $x$ was never written. 
Algorithm \ref{algo:memory} orders the updates exactly like Algorithm \ref{algo:transition system}. 
As the old values can never be read again, it is not necessary to store them forever, so the algorithm 
only keeps in memory the last known value of each register and its timestamp in a local memory $mem_i$, implemented with an associative array. 
When a process receives a notification for a write, it updates its local state if the value is newer that the current one, 
and the read operations just return the current value. This implementation only needs constant computation time for both 
the reads and the writes, and the complexity in memory only grows logarithmically with time and the number of participants.

\vspace{-2mm}
\section{Conclusion}\label{section:conclusion}
\vspace{-3mm}

This paper proposes a new consistency criterion, update consistency, that is stronger than eventual consistency and weaker than sequential consistency. Our approach formalizes the intuitive notions
of sequential specification for an abstract data type and distributed history. This formalization first allowed to prove that eventual consistency when associated with causal consistency or PRAM consistency can no more be implemented in an asynchronous distributed system where all but one process may crash. 

\vspace{-1mm}
This paper formalizes the new consistency criterion and proves that (1) it is strictly stronger than eventual consistency and (2) that it is universal in the sense that allowed any update consistent object can be implemented in wait-free systems. The latter has been proved through a generic construction that implement all considered data types.

\vspace{-2mm}
\section*{Acknowledgment}\label{section:ack}
\addcontentsline{toc}{section}{Acknowledgment} 
\vspace{-3mm}

This work has been partially supported by a French government support granted to the CominLabs excellence laboratory (Project \emph{DeSceNt: Plug-based Decentralized Social Network}) and managed by the French National Agency for Research (ANR) in the "Investing for the Future" program under reference Nb. ANR-10-LABX-07-01.
This work was partially funded by the French ANR project SocioPlug (ANR-13-INFR-0003).

\vspace{-2mm}
\bibliographystyle{IEEEtran}
\bibliography{ipdpsUC}

\begin{thebibliography}{10}
\providecommand{\url}[1]{#1}
\csname url@samestyle\endcsname
\providecommand{\newblock}{\relax}
\providecommand{\bibinfo}[2]{#2}
\providecommand{\BIBentrySTDinterwordspacing}{\spaceskip=0pt\relax}
\providecommand{\BIBentryALTinterwordstretchfactor}{4}
\providecommand{\BIBentryALTinterwordspacing}{\spaceskip=\fontdimen2\font plus
\BIBentryALTinterwordstretchfactor\fontdimen3\font minus
  \fontdimen4\font\relax}
\providecommand{\BIBforeignlanguage}[2]{{%
\expandafter\ifx\csname l@#1\endcsname\relax
\typeout{** WARNING: IEEEtran.bst: No hyphenation pattern has been}%
\typeout{** loaded for the language `#1'. Using the pattern for}%
\typeout{** the default language instead.}%
\else
\language=\csname l@#1\endcsname
\fi
#2}}
\providecommand{\BIBdecl}{\relax}
\BIBdecl

\bibitem{lipton1988pram}
R.~J. Lipton and J.~S. Sandberg, \emph{PRAM: A scalable shared memory}.\hskip
  1em plus 0.5em minus 0.4em\relax Princeton University, Department of Computer
  Science, 1988.

\bibitem{attiya1994sequential}
H.~Attiya and J.~L. Welch, ``Sequential consistency versus linearizability,''
  \emph{ACM Transactions on Computer Systems (TOCS)}, vol.~12, no.~2, pp.
  91--122, 1994.

\bibitem{AttiyaBD95}
H.~Attiya, A.~Bar{-}Noy, and D.~Dolev, ``Sharing memory robustly in
  message-passing systems,'' \emph{J. {ACM}}, vol.~42, no.~1, pp. 124--142,
  1995.

\bibitem{vogels2008eventually}
W.~Vogels, ``Eventually consistent,'' \emph{Queue}, vol.~6, no.~6, pp. 14--19,
  2008.

\bibitem{ahamad1995causal}
M.~Ahamad, G.~Neiger, J.~E. Burns, P.~Kohli, and P.~W. Hutto, ``Causal memory:
  Definitions, implementation, and programming,'' \emph{Distributed Computing},
  vol.~9, no.~1, pp. 37--49, 1995.

\bibitem{Herlihy91Wait}
M.~Herlihy, ``Wait-free synchronization,'' in \emph{ACM Transactions on
  Programming Languages and Systems}, 1991, pp. 124--149.

\bibitem{decandia2007dynamo}
G.~DeCandia, D.~Hastorun, M.~Jampani, G.~Kakulapati, A.~Lakshman, A.~Pilchin,
  S.~Sivasubramanian, P.~Vosshall, and W.~Vogels, ``Dynamo: amazon's highly
  available key-value store,'' in \emph{ACM SIGOPS Operating Systems Review},
  vol.~41.\hskip 1em plus 0.5em minus 0.4em\relax ACM, 2007, pp. 205--220.

\bibitem{shapiro2011conflict}
M.~Shapiro, N.~Pregui{\c{c}}a, C.~Baquero, and M.~Zawirski, ``Conflict-free
  replicated data types,'' in \emph{Stabilization, Safety, and Security of
  Distributed Systems}.\hskip 1em plus 0.5em minus 0.4em\relax Springer, 2011,
  pp. 386--400.

\bibitem{shapiro2011comprehensive}
M.~Shapiro, N.~Pregui{\c{c}}a, C.~Baquero, M.~Zawirski \emph{et~al.}, ``A
  comprehensive study of convergent and commutative replicated data types,''
  INRIA, Tech. Rep., 2011.

\bibitem{sun1998achieving}
C.~Sun, X.~Jia, Y.~Zhang, Y.~Yang, and D.~Chen, ``Achieving convergence,
  causality preservation, and intention preservation in real-time cooperative
  editing systems,'' \emph{ACM Transactions on Computer-Human Interaction
  (TOCHI)}, vol.~5, no.~1, pp. 63--108, 1998.

\bibitem{li2000new}
D.~Li, L.~Zhou, and R.~R. Muntz, ``A new paradigm of user intention
  preservation in realtime collaborative editing systems,'' in
  \emph{International Conference on Parallel And Distributed Systems}.\hskip
  1em plus 0.5em minus 0.4em\relax IEEE, 2000, pp. 401--408.

\bibitem{bieniusa2012optimized}
A.~Bieniusa, M.~Zawirski, N.~Pregui{\c{c}}a, M.~Shapiro, C.~Baquero,
  V.~Balegas, and S.~Duarte, ``An optimized conflict-free replicated set,''
  \emph{arXiv preprint arXiv:1210.3368}, 2012.

\bibitem{burckhardt2014replicated}
S.~Burckhardt, A.~Gotsman, H.~Yang, and M.~Zawirski, ``Replicated data types:
  specification, verification, optimality,'' in \emph{Proceedings of the 41st
  symposium on Principles of programming languages}.\hskip 1em plus 0.5em minus
  0.4em\relax ACM, 2014, pp. 271--284.

\bibitem{lamport1978time}
L.~Lamport, ``Time, clocks, and the ordering of events in a distributed
  system,'' \emph{Communications of the ACM}, vol.~21, no.~7, pp. 558--565,
  1978.

\bibitem{RR}
M.~Perrin, M.~Petrolia, C.~Jard, and A.~Most\'efaoui, ``Consistent shared data
  types: Beyond memory,'' LINA, Universit\'e de Nantes, Tech. Rep., 2014.

\bibitem{mealy1955method}
G.~H. Mealy, ``A method for synthesizing sequential circuits,'' \emph{Bell
  System Technical Journal}, vol.~34, no.~5, pp. 1045--1079, 1955.

\bibitem{perrinbrief}
M.~Perrin, A.~Most{\'e}faoui, and C.~Jard, ``Brief announcement: Update
  consistency in partitionable systems,'' in \emph{Proceedings of the 28th
  International Symposium on Distributed Computing}.\hskip 1em plus 0.5em minus
  0.4em\relax Springer, 2014, p. 546.

\bibitem{wuu1986efficient}
G.~T. Wuu and A.~J. Bernstein, ``Efficient solutions to the replicated log and
  dictionary problems,'' \emph{Operating systems review}, vol.~20, no.~1, pp.
  57--66, 1986.

\bibitem{aslan2011c}
K.~Aslan, P.~Molli, H.~Skaf-Molli, S.~Weiss \emph{et~al.}, ``C-set: a
  commutative replicated data type for semantic stores,'' in \emph{RED: Fourth
  International Workshop on REsource Discovery}, 2011.

\bibitem{mukund2014optimized}
M.~Mukund, G.~Shenoy, and S.~Suresh, ``Optimized or-sets without ordering
  constraints,'' in \emph{Distributed Computing and Networking}.\hskip 1em plus
  0.5em minus 0.4em\relax Springer, 2014, pp. 227--241.

\bibitem{goodman1991cache}
J.~R. Goodman, \emph{Cache consistency and sequential consistency}.\hskip 1em
  plus 0.5em minus 0.4em\relax University of Wisconsin-Madison, Computer
  Sciences Department, 1991.

\bibitem{KBL93}
A.~Karsenty and M.~Beaudouin-Lafon, ``An algorithm for distributed groupware
  applications,'' in \emph{ICDCS}, 1993.

\end{thebibliography}

\end{document}